\newcommand{\be}[1]{\begin{equation}\label{#1}}
\newcommand{\benon}{\begin{equation*}}  
\newcommand{\bemuln}[1]{\begin{multline}\label{#1}}
\newcommand{\bemul}{\begin{multline*}}
\newcommand{\bee}{\begin{eqnarray*}}
\newcommand{\eee}{\end{eqnarray*}}
\newcommand{\been}[1]{\begin{eqnarray}\label{#1}}
\newcommand{\eeen}{\end{eqnarray}}
\newcommand{\began}[1]{\begin{gather}\label{#1}}
\newcommand{\bega}{\begin{gather*}}
\newcommand{\bealn}[1]{\begin{align}\label{#1}}
\newcommand{\beal}{\begin{align*}}
\newcommand{\bealatn}[2]{\begin{alignat}{#1}\label{#2}}
\newcommand{\bealat}{\begin{alignat*}}
\newcommand{\bexalatn}[1]{\begin{xalignat}\label{#1}}
\newcommand{\bexalat}{\begin{xalignat*}}
\theoremstyle{plain} \newtheorem{thm}{Theorem}[section]
\newtheorem{lemma}[thm]{Lemma}
\theoremstyle{break} \theorembodyfont{\it}
\def\bx{{\mathbf x}}
\def\bI{{\mathbf I}}
\def\bP{{\mathbf P}}
\def\bQ{{\mathbf Q}}
\def\bU{{\mathbf U}}
\def\bY{{\mathbf Y}}
\def\bZ{{\mathbf Z}}
\def\texitem#1{\par\smallskip\noindent\hangindent 25pt
               \hbox to 25pt {\hss #1 ~}\ignorespaces}
\newcommand{\bGamma}{\boldsymbol{\Gamma}}
\newcommand{\bLambda}{\boldsymbol{\Lambda}}
\newcommand{\rmnum}[1]{\romannumeral #1}
\newcommand{\Rmnum}[1]{\expandafter\@slowromancap\romannumeral #1@}
\title{\LARGE \bf An Improved Composite Hypothesis Test for Markov Models with
  Applications in Network Anomaly Detection\authorrefmark{1} \thanks{*
    Research partially supported by the NSF under grants CNS-1239021 and
    IIS-1237022, by the ARO under grants W911NF-11-1-0227 and
    W911NF-12-1-0390, by the ONR under grant N00014-10-1-0952, and by the
    Cyprus Research Promotion Foundation under Grant New Infrastructure
    Project/Strategic/0308/26.}}
\author{Jing Zhang$^\dag$ and Ioannis Ch. Paschalidis$^\ddag$%
  \thanks{$\dag$ Division of Systems Engineering, Boston University,
    {\tt \small jzh@bu.edu}.}
  \thanks{$\ddag$ Department of Electrical and
  Computer Engineering and Division of Systems Engineering, Boston University,
  8 St. Mary's St., Boston, MA 02215, {\tt \small yannisp@bu.edu,
    http://sites.bu.edu/paschalidis}.}}
\begin{document}

\maketitle
\thispagestyle{empty}
\pagestyle{empty}

\begin{abstract}

  Recent work has proposed the use of a composite hypothesis Hoeffding
  test for statistical anomaly detection. Setting an appropriate
  threshold for the test given a desired false alarm probability
  involves approximating the false alarm probability. To that end, a
  large deviations asymptotic is typically used which, however, often
  results in an inaccurate setting of the threshold, especially for
  relatively small sample sizes. This, in turn, results in an anomaly
  detection test that does not control well for false alarms. In this
  paper, we develop a tighter approximation using the Central Limit
  Theorem (CLT) under Markovian assumptions. We apply our result to a
  network anomaly detection application and demonstrate its advantages
  over earlier work.

\end{abstract}

\begin{keywords}
  Hoeffding test, weak convergence, Markov chains, network anomaly
  detection.
\end{keywords}

\section{Introduction}

During the last decade, the applications of statistical anomaly
detection in communication networks using the Large Deviations Theory
(LDT) \cite{dembo1998large} have been extensively explored; see, e.g.,
\cite{paschalidis2009spatio}, \cite{CDC09}, \cite{robust-anomaly-tcns}, among
others. Statistical anomaly detection involves characterizing the
``normal behavior'' of the system and identifying the time instances
corresponding to abnormal system behavior. Assuming that the network
traffic is stationary in time, \cite{paschalidis2009spatio} applies two
methods to that end. The first method, termed ``\textit{model-free},''
models traffic as an independent and identically distributed (i.i.d.)
sequence. The second method, termed ``\textit{model-based},'' models
traffic as a finite-state Markov chain. These are then extended in
\cite{robust-anomaly-tcns} to the case where the network traffic is
time-varying. Essentially, each of these methods is designed to tackle a
certain Universal Hypothesis Testing Problem (UHTP).

A UHTP aims to test the hypothesis that a given sequence of observations
is drawn from a known \textit{Probability Law (PL)} (i.e.,
\textit{probability distribution}) defined on a finite alphabet
\cite{TIT13}. It is well known that the test proposed by Hoeffding
\cite{hoeffding1965} is optimal in an error exponent sense
\cite{dembo1998large}, \cite{paschalidis2009spatio},
\cite{robust-anomaly-tcns}. When 
implementing Hoeffding's test, one must set a threshold $\eta$, which
can be estimated by using Sanov's theorem \cite{dembo1998large}, a
result within the scope of LDT. Note that such an estimate (let us
denote it by $\eta^{sv}$) is valid only in the asymptotic sense. In
practice, however, only a finite number of observations are available,
and it can be shown by experiments (e.g., using the software package
TAHTIID \cite{TAHTIID}) that $\eta^{sv}$ is typically not accurate
enough.

To improve the accuracy of the estimation for $\eta$, \cite{TIT13} (or,
see \cite{unnikrishnan2011universal}) borrows a method typically used by
statisticians; that is, deriving results based on Weak Convergence (WC)
of the test statistic to approximate the error probabilities of
Hoeffding's test. Under the i.i.d. assumption, by using the technique
combining WC results together with Taylor's series expansion, \cite{TIT13} (or,
see \cite{unnikrishnan2011universal}, \cite{wilks1938large}) gives an
alternative approximation for $\eta$ (let us denote it by
$\eta^{wc}$). Using \cite{TAHTIID}, one can verify that, in the finite
sample-size setting, $\eta^{wc}$ is typically much more accurate than
$\eta^{sv}$.

It is worth pointing out that some researchers have also tried to obtain
a more accurate approximation for $\eta$ by refining Sanov's theorem
\cite{iltis1995sharp}. However, as noted in
\cite{unnikrishnan2011universal}, such refinements of large deviation
results are typically hard to calculate.

In this work we establish $\eta^{wc}$ under a Markovian assumption,
thus, extending the work of \cite{TIT13} which considered the
i.i.d. case. We apply the proposed procedure to network anomaly
detection by embedding it into the software package SADIT \cite{SADIT}.

The rest of the paper is organized as follows. We first formulate the
problem in Sec.~\ref{sec:prob} and derive the theoretical results in
Sec.~\ref{sec:theory}. We then present the experimental results in
Sec.~\ref{sec:num} and finally provide concluding remarks in
Sec.~\ref{sec:con}.

\textbf{Notation:} By convention, all vectors are column vectors. To
save space, we write $\bx = (x_1, \ldots, x_n)$ to denote the column
vector $\bx$. We use prime to denote the transpose of a
matrix or vector.


\section{Problem Formulation} \label{sec:prob}

Let $\Xi = \left\{{\xi _i}; ~ i = 1, \ldots, N \right\}$ be a finite
alphabet. The observations $\bY = \left\{ {{Y_l};~l =
    0,1,2, \ldots } \right\}$ form a stochastic process taking values
from $\Xi$.

In the \textit{model-based} UHTP \cite{dembo1998large}, \cite{paschalidis2009spatio}, \cite{robust-anomaly-tcns}, under the null hypothesis $\cal H$, the observations  $\bY = \left\{ {{Y_l}; ~l = 0,1,2, \ldots } \right\}$ are drawn according to a Markov chain with state set $\Xi$. Let its transition matrix be $\bQ = \left( {{q_{ij}}} \right)_{i,j = 1}^N$.

Let $\textbf{1}_{\{\cdot\}}$ denote the indicator function. Define the \textit{empirical PL} by
\begin{align} {\Gamma _n}\left( {{\theta_{ij}}} \right) =
  \frac{1}{n}\sum\nolimits_{l = 1}^n {\textbf{1}_{\left\{ {{Z_l} =
          {\theta_{ij}}} \right\}}}, \label{1}
\end{align}
where ${Z_l} = \left( {{Y_{l - 1}},~{Y_l}} \right), ~l=1,\ldots,n$,
${\theta_{ij}} = \left( {{\xi _i},~{\xi _j}} \right) \in \Xi \times \Xi
$, $i,j = 1, \ldots ,N$, and denote the new alphabet $\Theta = \left\{
  {{\theta_{ij}}};~i, j = 1, \ldots, N \right\} = \left\{ {{{\tilde
        {\theta}}_k}}; ~k = 1, \ldots, N^2 \right\}$.  Note that $\Theta
= \Xi \times \Xi$ and ${{\tilde {\theta}}_1} = {\theta_{11}}$, $\ldots$,
${{\tilde {\theta}}_N} = {\theta_{1N}}$, $\ldots$, ${{\tilde
    {\theta}}_{\left( {N - 1} \right)N + 1}} = {\theta_{N1}}$, $\ldots$,
${{\tilde {\theta}}_{{N^2}}} = {\theta_{NN}}$.  Let also the set of PLs
on $\Theta$ be $\mathcal{P}(\Theta)$.

It is seen that the transformed observations $\bZ = \left\{ {{Z_l};~l =
    1,2, \ldots } \right\}$ form a Markov chain evolving on
$\Theta$. Let its transition matrix be $\bP = \left( {{p_{ij}}}
\right)_{i,j = 1}^{{N^2}}$ with stationary distribution
\[{\boldsymbol{\pi}} = \left( {{\pi _{ij}}};~i, j = 1, \ldots, N \right)
= \left( {{{\tilde {\pi}}_k}};~k = 1, \ldots, N^2 \right),\] where
${{\pi _{ij}}}$ denotes the probability of seeing $\theta_{ij}$, and
${{\tilde {\pi}}_1} = {{\pi}_{11}}$, $\ldots$, ${{\tilde {\pi}}_N} =
{{\pi}_{1N}}$, $\ldots$, ${{\tilde {\pi}}_{\left( {N - 1} \right)N + 1}}
= {{\pi}_{N1}}$, $\ldots$, ${{\tilde {\pi}}_{{N^2}}} = {{\pi}_{NN}}$.
Let $p\left( {{\theta_{ij}}\left| {{\theta_{kl}}} \right.} \right)$
denote the transition probability from $\theta_{kl}$ to
$\theta_{ij}$. Then we have
\[p\left( {{\theta_{ij}}\left| {{\theta_{kl}}} \right.} \right) =
\textbf{1}_{\left\{ {i = l} \right\}}{q_{ij}}, \quad k,l,i,j = 1,
\ldots ,N,\] which enables us to obtain $\bP$ directly from $\bQ$.

Write
${\boldsymbol{\pi} } = ({{\pi _{11}}}, \ldots, {{\pi _{1N}}}, \ldots, {{\pi _{N1}}}, \ldots, {{\pi _{NN}}})$, and ${\boldsymbol{\Gamma} _n} = ({{\Gamma _n}\left( {{\theta_{11}}} \right)}, \ldots, {{\Gamma _n}\left( {{\theta_{1N}}} \right)}, \ldots, {{\Gamma _n}\left( {{\theta_{N1}}} \right)}, \ldots, {{\Gamma _n}\left( {{\theta_{NN}}} \right)}) $.

Define the \textit{relative entropy} (i.e., \textit{divergence})
\begin{align}
  D\left( {\left. {{\boldsymbol{\Gamma} _n}} \right\|\boldsymbol{\pi} }
  \right) = \sum\limits_{i = 1}^N {\sum\limits_{j = 1}^N {{\Gamma
        _n}\left( {{\theta_{ij}}} \right)\log \frac{{{{{\Gamma _n}\left(
                {{\theta_{ij}}} \right)} \mathord{\left/ {\vphantom
                  {{{\Gamma _n}\left( {{\theta_{ij}}} \right)}
                    {\sum\nolimits_{j = 1}^N {{\Gamma _n}\left(
                          {{\theta_{ij}}} \right)} }}} \right.
              \kern-\nulldelimiterspace} {{\sum\nolimits_{j = 1}^N
                {{\Gamma _n}\left( {{\theta_{ij}}} \right)}
              }}}}}{{{{{\pi _{ij}}} \mathord{\left/ {\vphantom {{{\pi
                        _{ij}}} {{\sum\nolimits_{j = 1}^N {{\pi _{ij}}}
                      }}}} \right.  \kern-\nulldelimiterspace} {
              {\sum\nolimits_{j = 1}^N {{\pi _{ij}}} } }}}}} },
\label{2}
\end{align}
and the \textit{empirical measure}
\begin{align}
{\bU_n} = \sqrt n \left( {{\boldsymbol{\Gamma} _n} - \boldsymbol{\pi} } \right).
\label{2.0}
\end{align}

For any positive integer $n$, let ${{\cal{H}}_n}$ be the output of the
test that decides to accept or to reject the null hypothesis $\cal H$
based on the first $n$ observations in the sequence $\bZ$.  Under the
Markovian assumption, Hoeffding's test \cite{dembo1998large} is given by
\begin{align}
{{\cal{H}}_n} = {\cal{H}} \Leftrightarrow D\left( {\left. {{\boldsymbol{\Gamma} _n}} \right\|{\boldsymbol{\pi}}} \right) \leq \eta_n, \label{333}
\end{align}
where $D\left( {\left. {{\boldsymbol{\Gamma} _n}} \right\|{\boldsymbol{\pi}}} \right)$, defined in \eqref{2}, is the \textit{test statistic}, and $\eta_n$ is a \textit{threshold}. Note that, when applied in anomaly detection, Hoeffding's test will report an anomaly if and only if ${{\cal{H}}_n} \neq \cal H$, or, equivalently, $D\left( {\left. {{\boldsymbol{\Gamma} _n}} \right\|{\boldsymbol{\pi}}} \right) > \eta_n$.

Under hypothesis $\cal H$, the \textit{false alarm probability}
\cite{TIT13} of the test sequence $\left\{ {{{\cal H}_n}} \right\}$ as a
function of $n$ is defined by
\begin{align}
{\beta _n} = {P_{\cal H}}\left\{ {D\left( {\left. {{\boldsymbol{\Gamma} _n}} \right\|{\boldsymbol{\pi}}} \right) > \eta_n } \right\}, \label{334}
\end{align}
where $P_{\cal H}\{A\}$ denotes the probability of event $A$ under
hypothesis $\cal H$. Sanov's theorem \cite{dembo1998large} implies the
approximation for $\eta_n$
\begin{align} 
\eta_n^{sv}  \approx  - \frac{1}{n}\log \left( {{\beta _n}} \right). \label{335}
\end{align}
Given $\beta _n \in (0,1)$, our central goal in this paper is to
improve the accuracy of the approximation for $\eta_n$.


\section{Theoretical Results} \label{sec:theory}
We introduce the following condition:
\smallskip

\noindent \textbf{Condition (C)}.  {\em $\bZ= \left\{ {{Z_l}; ~l = 1,2,
      \ldots } \right\}$ is an aperiodic, irreducible, and positive recurrent Markov chain
  \cite{jones2004markov}, \cite{MEYN1993} evolving on $\Theta$ with stationary
  distribution $\boldsymbol{\pi}$, and with the same $\boldsymbol{\pi}$
  as its initial distribution.}  \smallskip

\textit{Remark 1:} Since $\Theta$ is a finite set, we know that $\bZ$ is uniformly ergodic \cite{jones2004markov}, \cite{roberts2004general}. Assuming the initial distribution to be 
$\boldsymbol{\pi}$ is for notational simplicity; our results apply for any feasible initial distribution. Note also that, under \textbf{\text{(C)}}, $\boldsymbol{\pi}$  must have full support over $\Theta$, meaning each entry in 
$\boldsymbol{\pi}$ is strictly positive. 

\begin{lemma} \label{le1}

Assume \textbf{\text{(C)}} holds. Then
\begin{align}
\frac{{{\pi _{ij}}}}{{\sum\nolimits_{t = 1}^N {{\pi _{it}}} }} = \frac{{{\pi _{ij}}}}{{\sum\nolimits_{t = 1}^N {{\pi _{ti}}} }} = {q_{ij}}, \quad i,j=1,\ldots,N.
\label{3}
\end{align}

\end{lemma}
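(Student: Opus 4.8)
The plan is to work directly from the global balance (stationarity) equations for $\boldsymbol{\pi}$ under the pair-chain $\bZ$, combined with the explicit expression $p\left( {{\theta_{ij}}\left| {{\theta_{kl}}} \right.} \right) = \textbf{1}_{\left\{ {i = l} \right\}}{q_{ij}}$ derived in Sec.~\ref{sec:prob}. Under \textbf{(C)}, $\boldsymbol{\pi}$ is the stationary distribution of $\bP$, so for every state $\theta_{ij}$ the balance equation reads $\pi_{ij} = \sum_{k=1}^N \sum_{l=1}^N \pi_{kl}\, p\left( {{\theta_{ij}}\left| {{\theta_{kl}}} \right.} \right)$. First I would substitute the explicit transition probabilities: the factor $\textbf{1}_{\{i=l\}}$ constrains the second index of the conditioning state $\theta_{kl}$ to equal the (fixed) first index $i$ of the target, so the double sum collapses to a single sum over $k$ and $q_{ij}$ factors out, yielding the core identity
\begin{align}
\pi_{ij} = q_{ij} \sum_{k=1}^N \pi_{ki}, \qquad i,j = 1,\ldots,N. \nonumber
\end{align}

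Since under \textbf{(C)} the stationary distribution $\boldsymbol{\pi}$ has full support (Remark~1), the column marginal $\sum_{t=1}^N \pi_{ti} = \sum_{k=1}^N \pi_{ki}$ is strictly positive, so dividing the core identity through by it immediately gives the second equality $\pi_{ij}/\sum_{t=1}^N \pi_{ti} = q_{ij}$. It then remains only to establish the first equality, and for this it suffices to show that the row marginal equals the column marginal, i.e.\ $\sum_{t=1}^N \pi_{it} = \sum_{t=1}^N \pi_{ti}$. To do so I would sum the core identity over $j$ and use that $\bQ$ is row-stochastic, $\sum_{j=1}^N q_{ij} = 1$:
\begin{align}
\sum_{j=1}^N \pi_{ij} = \Bigl(\sum_{j=1}^N q_{ij}\Bigr)\sum_{k=1}^N \pi_{ki} = \sum_{k=1}^N \pi_{ki}. \nonumber
\end{align}
The left side is the row marginal and the right side is the column marginal, so the two marginals coincide; dividing the core identity by the row marginal then yields the first equality as well, completing the argument.

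There is no deep obstacle here; the result follows from a short computation. The only points requiring care are the index bookkeeping when collapsing the double sum (correctly identifying which index of $\theta_{kl}$ the indicator constrains) and the appeal to full support of $\boldsymbol{\pi}$ to guarantee the denominators are nonzero, so that the stated ratios are well-defined. As a consistency check, the core identity can be read as $\pi_{ij} = \rho_i q_{ij}$ with $\rho_i := \sum_{t=1}^N \pi_{ti}$, and one verifies from it that $\boldsymbol{\rho}$ is the stationary distribution of $\bQ$; this matches the elementary fact that in a stationary chain the pair $(\xi_i,\xi_j)$ occurs with probability equal to the stationary mass at $\xi_i$ times the one-step transition probability $q_{ij}$.
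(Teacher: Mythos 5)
Your proposal is correct and follows essentially the same route as the paper's proof: both extract the core identity $\pi_{ij} = q_{ij}\sum_{t}\pi_{ti}$ from the stationarity equation $\boldsymbol{\pi}\bP = \boldsymbol{\pi}$ together with the structure $p(\theta_{ij}\mid\theta_{kl}) = \textbf{1}_{\{i=l\}}q_{ij}$, then sum over $j$ and use row-stochasticity of $\bQ$ to equate the row and column marginals before dividing. The only difference is presentational — the paper carries out the computation explicitly for $i=1$ and appeals to symmetry for the other indices, whereas you handle all $(i,j)$ at once and explicitly invoke full support of $\boldsymbol{\pi}$ to justify the division.
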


\begin{proof}
Expanding the first $N$ entries of $\boldsymbol{\pi} \bP = \boldsymbol{\pi}$, 
we obtain $q_{1i}\sum\nolimits_{t = 1}^N {{\pi _{t1}}}  = {\pi _{1i}}, ~i = 1, \ldots, N$.
Summing up both sides of these equations, it follows
\begin{align}
\left( {\sum\nolimits_{i = 1}^N {{q_{1i}}} } \right) \cdot \sum\nolimits_{t = 1}^N {{\pi _{t1}}}  = \sum\nolimits_{t = 1}^N {{\pi _{1t}}}.  \label{3.2}
\end{align}
Noticing $\sum\nolimits_{i = 1}^N {{q_{1i}}}  = 1$, we see
that \eqref{3.2} implies
$\sum\nolimits_{t = 1}^N {{\pi _{t1}}}  = \sum\nolimits_{t = 1}^N {{\pi _{1t}}}$, 
which, together with $q_{11}\sum\nolimits_{t = 1}^N {{\pi _{t1}}}  = {\pi _{11}}$, yields
\[\frac{{{\pi _{11}}}}{{\sum\nolimits_{t = 1}^N {{\pi _{1t}}} }} = \frac{{{\pi _{11}}}}{{\sum\nolimits_{t = 1}^N {{\pi _{t1}}} }} = {q_{11}}.\]
Similarly, we can show \eqref{3} holds for all the other $(i,j)$'s.    
\end{proof}

\textit{Remark 2:} By \textit{Remark 1} and Lemma \ref{le1} we see that, under Condition \textbf{\text{(C)}}, all entries in $\bQ$ are strictly positive, indicating that any two states of the original chain $\bY$ are connected. This is a strict condition; yet, in engineering practice, if some $\pi_{ij}$ in \eqref{3} are equal to zero, we can always replace them with a small positive value, and then normalize the modified vector $\boldsymbol{\pi}$, thus ensuring that Condition \textbf{\text{(C)}} can be satisfied. See Sec. \ref{sec:num.A} for an illustration of this trick.

\subsection{Weak convergence of $\bU_n$}
\label{WeakConvergenceOfUn}

Let us first establish CLT results for one-dimensional empirical measures
\begin{align}
  U_n^{\scriptscriptstyle{\left( k \right)}} = \sqrt n \left( {{\Gamma
        _n}\left( {{{\tilde {\theta}}_k}} \right) - {{\tilde \pi }_k}}
  \right), \quad k = 1, \ldots ,{N^2}.
\label{2.1}
\end{align}
Fix $k \in \{1,\ldots,N^2\}$. Define
\begin{align}
{f_k}\left( Z \right) = \textbf{1}_{\left\{ {Z = {{\tilde {\theta}}_k}} \right\}} = \left\{ \begin{gathered}
    1,{\text{ if }}Z = {{\tilde {\theta}}_k}, \hfill \\
    0,{\text{ if }}Z \in \Theta\backslash \left\{ {{{\tilde {\theta}}_k}} \right\}. \hfill \\
\end{gathered}  \right.
\label{998}
\end{align}
\begin{lemma} \label{le0} Assume \textbf{\text{(C)}} holds. Then a CLT
  holds for $U_n^{\scriptscriptstyle{\left( k \right)}}$; that is,
  $U_n^{\scriptscriptstyle{\left( k \right)}}\xrightarrow[{n \to \infty
  }]{{d.}}\mathcal{N}\left( {0,~\sigma _k^2} \right)$ with $\sigma _k^2
  = \operatorname{Cov} \left( {{f_k}\left( {{Z_1}} \right),~{f_k}\left(
        {{Z_1}} \right)} \right) + 2\sum\nolimits_{m = 1}^\infty
  {\operatorname{Cov} } \left( {{f_k}\left( {{Z_1}} \right),~{f_k}\left(
        {{Z_{1 + m}}} \right)} \right) < \infty$.
\end{lemma}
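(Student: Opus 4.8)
The plan is to establish the CLT for $U_n^{\scriptscriptstyle{\left( k \right)}}$ by recognizing it as the normalized partial sum of a bounded functional of a uniformly ergodic Markov chain, and then invoking a Markov-chain central limit theorem. First I would rewrite $U_n^{\scriptscriptstyle{\left( k \right)}}$ in terms of $f_k$: since $\Gamma_n(\tilde\theta_k) = \frac{1}{n}\sum_{l=1}^n f_k(Z_l)$ by \eqref{998} and \eqref{1}, and since $\mathbb{E}_{\boldsymbol{\pi}}[f_k(Z_1)] = P_{\cal H}\{Z_1 = \tilde\theta_k\} = \tilde\pi_k$ under the stationary initial distribution, we have
\begin{align*}
U_n^{\scriptscriptstyle{\left( k \right)}} = \frac{1}{\sqrt n}\sum_{l=1}^n \left( f_k(Z_l) - \tilde\pi_k \right).
\end{align*}
This is exactly the object to which a Markov CLT applies.

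Next I would verify the hypotheses of a standard CLT for functionals of Markov chains (e.g., the version in \cite{jones2004markov} or \cite{MEYN1993}). By Condition \textbf{(C)} and \textit{Remark 1}, $\bZ$ is uniformly ergodic on the finite state space $\Theta$ and is started from its stationary distribution $\boldsymbol{\pi}$, so it is stationary. The function $f_k$ is bounded (it takes values in $\{0,1\}$), hence it lies in $L^2(\boldsymbol{\pi})$ and is certainly centered after subtracting $\tilde\pi_k$. For uniformly ergodic chains the mixing is geometric, so a bounded centered functional automatically satisfies the moment/summability conditions that guarantee a CLT. Applying the theorem yields the asymptotic normality with limiting variance equal to the time-average variance $\sigma_k^2 = \lim_{n\to\infty} \frac{1}{n}\operatorname{Var}\big(\sum_{l=1}^n f_k(Z_l)\big)$, which under stationarity expands into the autocovariance series
\begin{align*}
\sigma_k^2 = \operatorname{Cov}\left( f_k(Z_1), f_k(Z_1) \right) + 2\sum_{m=1}^\infty \operatorname{Cov}\left( f_k(Z_1), f_k(Z_{1+m}) \right),
\end{align*}
matching the claimed expression.

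The remaining step is to confirm that this series converges, i.e., $\sigma_k^2 < \infty$. Here I would use geometric ergodicity: uniform ergodicity on the finite set $\Theta$ gives a bound of the form $\lvert \operatorname{Cov}(f_k(Z_1), f_k(Z_{1+m})) \rvert \le C\rho^m$ for constants $C > 0$ and $\rho \in (0,1)$, because $f_k$ is bounded and the $m$-step transition law converges to $\boldsymbol{\pi}$ at a geometric rate. Summing the geometric bound shows the autocovariance series converges absolutely, so $\sigma_k^2$ is finite, completing the argument.

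The main obstacle is not any single computation but selecting and correctly invoking the appropriate Markov CLT so that its limiting variance coincides with the stated autocovariance series; once the right theorem is cited, boundedness of $f_k$ together with uniform (geometric) ergodicity makes all of its conditions—centering, square-integrability, and summable mixing—immediate, so the finiteness of $\sigma_k^2$ follows without delicate estimates. I would take care to note that stationarity (from initializing at $\boldsymbol{\pi}$) is what lets me write the variance as the clean one-sided autocovariance series rather than a more cumbersome Cesàro limit.
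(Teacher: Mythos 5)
Your proposal is correct and follows essentially the same route as the paper: the paper's proof simply invokes \cite[Corollary 1]{jones2004markov} (a CLT for uniformly ergodic Markov chains), noting that $f_k$ is bounded and $\bZ$ is uniformly ergodic, which is exactly the theorem-selection and hypothesis-verification you carry out. Your additional explicit steps---rewriting $U_n^{\scriptscriptstyle{\left( k \right)}}$ as a centered partial sum and bounding the autocovariances geometrically to get $\sigma_k^2 < \infty$---are details that the cited corollary already packages into its conclusion, so they are sound but not a departure from the paper's argument.
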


\begin{proof}
  This can be established by applying \cite[Corollary
    1]{jones2004markov}. Noting $f_k(\cdot)$ is bounded and the chain $\bZ$ is
  uniformly ergodic, we see that all the conditions needed by 
    \cite[Corollary 1]{jones2004markov} are satisfied.
\end{proof}

Now we can directly extend Lemma \ref{le0}  to the multidimensional case (for
an informal argument, see, e.g., \cite{pene2005rate};
cf. \cite[Sec. \Rmnum{3}.6]{feller1957introduction} for the definition of a
Gaussian random variable with a general normal distribution). Due to its role 
in the following applications, we state the result as a theorem.
\begin{thm} \label{th1} Assume \textbf{\text{(C)}} holds. Then a
  multidimensional CLT holds for $\bU_n$; that is,
\begin{align}
{\bU_n}\xrightarrow[{n \to \infty }]{{d.}}\mathcal{N}\left( {0, ~\bLambda} \right)
\label{4}
\end{align}
with $\bLambda$ being an $N^2 \times N^2$ covariance matrix given by
\begin{align}
\bLambda =  {\bLambda _0} + \sum\nolimits_{m = 1}^\infty  {{\bLambda _m}}, 
\label{40}
\end{align}
where ${\bLambda _0}$ and ${\bLambda _m}$ are specified, respectively,
by ${\bLambda _0} = \left[ {\operatorname{Cov} \left( {{f_i}\left(
          {{Z_1}} \right),~{f_j}\left( {{Z_1}} \right)} \right)}
\right]_{i,j = 1}^{{N^2}}$ and ${\bLambda _m} = \left[
  {{{\operatorname{Cov} } }\left( {{f_i}\left( {{Z_1}}
        \right),~{f_j}\left( {{Z_{1 + m}}} \right)} \right) +
    {{\operatorname{Cov} } }\left( {{f_j}\left( {{Z_1}}
        \right),~{f_i}\left( {{Z_{1 + m}}} \right)} \right)}
\right]_{i,j = 1}^{{N^2}}$, $m = 1, 2, \ldots$.
\end{thm}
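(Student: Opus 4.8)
The plan is to reduce the multidimensional statement to the one-dimensional CLT of Lemma~\ref{le0} via the Cram\'er--Wold device. It suffices to show that, for every fixed $\bc = (c_1,\ldots,c_{N^2}) \in \mathbb{R}^{N^2}$, the scalar projection $\bc' \bU_n$ converges in distribution to $\mathcal{N}(0,~\bc' \bLambda \bc)$; since $\mathcal{N}(0,\bLambda)$ is precisely the law whose one-dimensional linear projections are the $\mathcal{N}(0,\bc'\bLambda\bc)$, this yields \eqref{4}.

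First I would rewrite the projection as a single empirical average. Setting $g(Z) = \sum_{k=1}^{N^2} c_k f_k(Z)$, a bounded real-valued function on the finite set $\Theta$, and using $\mathbb{E}_{\boldsymbol{\pi}}[g(Z_1)] = \bc' \boldsymbol{\pi}$, we have
\begin{align}
\bc' \bU_n = \sqrt{n}\Bigl( \tfrac{1}{n}\sum\nolimits_{l=1}^n g(Z_l) - \mathbb{E}_{\boldsymbol{\pi}}[g(Z_1)] \Bigr). \nonumber
\end{align}
Since $g$ is bounded and $\bZ$ is uniformly ergodic under \textbf{(C)}, the same result \cite[Corollary~1]{jones2004markov} invoked in Lemma~\ref{le0} applies verbatim with $g$ in place of $f_k$, giving $\bc' \bU_n \xrightarrow[{n\to\infty}]{d.} \mathcal{N}(0,~\sigma_g^2)$ with $\sigma_g^2 = \operatorname{Var}(g(Z_1)) + 2\sum_{m=1}^\infty \operatorname{Cov}(g(Z_1),~g(Z_{1+m})) < \infty$.

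It then remains to verify the algebraic identity $\sigma_g^2 = \bc' \bLambda \bc$. Expanding $g$ by bilinearity of covariance, $\operatorname{Var}(g(Z_1)) = \sum_{i,j} c_i c_j \operatorname{Cov}(f_i(Z_1),~f_j(Z_1)) = \bc' \bLambda_0 \bc$, while for each $m$, $2\operatorname{Cov}(g(Z_1),~g(Z_{1+m})) = \sum_{i,j} c_i c_j \bigl[\operatorname{Cov}(f_i(Z_1),~f_j(Z_{1+m})) + \operatorname{Cov}(f_j(Z_1),~f_i(Z_{1+m}))\bigr] = \bc' \bLambda_m \bc$, where the symmetrization is free because $c_i c_j$ is symmetric in $(i,j)$ and only the symmetric part of a matrix contributes to a quadratic form. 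Summing over $m$ gives $\sigma_g^2 = \bc'\bigl(\bLambda_0 + \sum_m \bLambda_m\bigr)\bc = \bc'\bLambda\bc$, as claimed; letting $\bc$ range over the standard basis and its pairwise sums also shows every entry of $\bLambda$ (and hence the series $\sum_m \bLambda_m$) is finite, so $\bLambda$ is a well-defined symmetric positive-semidefinite matrix.

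The only genuinely delicate point is justifying the symmetrized form of $\bLambda_m$ in \eqref{40}: the one-dimensional CLT delivers the asymptotic variance through the single cross term $\operatorname{Cov}(g(Z_1),~g(Z_{1+m}))$, which is not symmetric in the indices $i,j$ of $f_i,f_j$, whereas $\bLambda$ must be symmetric to be a covariance matrix. Resolving this amounts to the observation that replacing a matrix by its symmetric part leaves the quadratic form $\bc'(\cdot)\bc$ unchanged, but the bookkeeping must be done carefully to match the asserted expression exactly. Everything else is either the direct invocation of the uniformly ergodic CLT underlying Lemma~\ref{le0} or routine bilinear expansion.
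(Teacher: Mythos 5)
Your proposal is correct, and it runs on the same engine as the paper: the scalar CLT for bounded functionals of a uniformly ergodic chain (\cite[Corollary 1]{jones2004markov}) that underlies Lemma~\ref{le0}. The difference is one of completeness rather than of route. The paper gives no actual proof of Theorem~\ref{th1}; it asserts that Lemma~\ref{le0} ``directly extends'' to the multidimensional case, deferring to an informal argument in \cite{pene2005rate} and to \cite{feller1957introduction} for the definition of a general multivariate Gaussian, and then devotes its effort to computing the entries of $\bLambda$ explicitly via \eqref{8}. What you supply is exactly the content that makes the asserted extension rigorous: the Cram\'er--Wold reduction applied to $g = \sum_k c_k f_k$ (which is again bounded, so \cite[Corollary 1]{jones2004markov} applies verbatim); the bilinearity check that $\operatorname{Var}(g(Z_1)) = \bc'\bLambda_0\bc$ and, via the index swap $\sum_{i,j} c_i c_j \operatorname{Cov}\left(f_j(Z_1),~f_i(Z_{1+m})\right) = \sum_{i,j} c_i c_j \operatorname{Cov}\left(f_i(Z_1),~f_j(Z_{1+m})\right)$, that $2\operatorname{Cov}\left(g(Z_1),~g(Z_{1+m})\right) = \bc'\bLambda_m\bc$, which is precisely what reconciles the single nonsymmetric cross term in the scalar asymptotic variance with the symmetrized matrices in \eqref{40}; and the polarization argument giving entrywise convergence of $\sum_m \bLambda_m$, so that $\bLambda$ is a well-defined symmetric positive semidefinite matrix and the limit law $\mathcal{N}(0,~\bLambda)$ in \eqref{4} is meaningful. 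Your version buys a self-contained proof where the paper offers a citation; the paper's presentation buys brevity and moves directly to the computable form of $\bLambda$ needed downstream.
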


The rest of this subsection is devoted to the computation of $\bLambda$;
we will express $\bLambda = \big[{{\Lambda ^{\scriptscriptstyle{\left(
          {i,j} \right)}}}}\big]_{i,j = 1}^{{N^2}}$ in terms of the
quantities determining the probabilistic structure of the chain $\bZ$.

In particular, we have
\begin{align}
  \Lambda _0^{\scriptscriptstyle{\left( {i,j} \right)}} =
  {\text{Cov}}\big( {\textbf{1}_{\{ {{Z_1} = {{\tilde {\theta}}_i}}
      \}},~\textbf{1}_{\{ {{Z_1} = {{\tilde {\theta}}_j}} \}}} \big),
  \quad i,j = 1, \ldots ,{N^2},
\label{6}
\end{align}
and
\begin{align}
  \Lambda _m^{\scriptscriptstyle{\left( {i,j} \right)}} ~=~ & {\text{Cov}}\big( {\textbf{1}_{\{ {{Z_1} = {{\tilde {\theta}}_i}} \}},~\textbf{1}_{\{ {{Z_{1 + m}} = {{\tilde {\theta}}_j}} \}}} \big)  \nonumber \\
  &+ {\text{Cov}}\big( {\textbf{1}_{\{ {{Z_1} = {{\tilde {\theta}}_j}}
      \}},~\textbf{1}_{\{ {{Z_{1 + m}} = {{\tilde {\theta}}_i}} \}}}
  \big), \label{7}
\end{align}
$m = 1,2, \ldots ;~i,j = 1, \ldots ,{N^2}$.

We first determine \eqref{6} as follows.  By direct calculation, we
obtain $\Lambda _0^{\scriptscriptstyle{\left( {1,1} \right)}} = {{\tilde
    \pi }_1}\left( {1 - {{\tilde \pi }_1}} \right)$, $\Lambda
_0^{\scriptscriptstyle{\left( {1,2} \right)}} = - {{\tilde \pi
  }_1}{{\tilde \pi }_2}$, and so on. In general, we have $\Lambda
_0^{\scriptscriptstyle{\left( {i,j} \right)}} = \tilde{\pi}_i \left(
  \bI_{ij} - \tilde{\pi}_j \right), ~i, j = 1, \ldots ,{N^2}$, where
$\bI_{ij}$ denotes the $(i,j)$-th entry of the identity matrix.

Now we compute $\Lambda _2^{\scriptscriptstyle{\left( {i,j} \right)}}$
by \eqref{7}.  Omitting the details, again by direct calculation, we
arrive at ${\text{Cov}}\big( {\textbf{1}_{\{ {{Z_1} = {{\tilde
          {\theta}}_i}} \}},\textbf{1}_{\{ {{Z_3} = {{\tilde
          {\theta}}_j}} \}}} \big) = {{\tilde \pi
  }_i}\big({{\bP^2_{ij}} - {{\tilde \pi }_j}} \big)$, where
$\bP^2_{ij}$ is the $(i,j)$-th entry of the matrix $\bP^2$ (the square
of the transition matrix $\bP$).  Interchanging the indexes $i,j$, we
obtain ${\text{Cov}}\big( {\textbf{1}_{\{ {{Z_1} = {{\tilde
          {\theta}}_j}} \}},\textbf{1}_{\{ {{Z_3} = {{\tilde
          {\theta}}_i}} \}}} \big) = \tilde{\pi}_j \big( \bP^2_{ji} -
  \tilde{\pi}_i \big)$.  Thus, we have $\Lambda
_2^{\scriptscriptstyle{\left( {i,j} \right)}} = {{\tilde \pi }_i}\big(
  \bP_{ij}^2 - \tilde{\pi}_j \big) + {{\tilde \pi }_j}\big(
  \bP_{ji}^2 - \tilde{\pi}_i \big)$.

Computing in a similar way, in general, we derive
\[\Lambda _m^{\scriptscriptstyle{\left( {i,j} \right)}} = {{\tilde \pi }_i}\left( {\bP_{ij}^m - {{\tilde \pi }_j}} \right) + {{\tilde \pi }_j}\left( {\bP_{ji}^m - {{\tilde \pi }_i}} \right),\quad m = 1,2, \ldots, \]
where $\bP^m_{ij}$ is the $(i,j)$-th entry of the matrix $\bP^m$ (the $m$-th power of the transition matrix $\bP$).

Thus, for $i, j = 1, \ldots ,{N^2}$ we finally attain
\begin{equation}
{\Lambda ^{\scriptscriptstyle{\left( {i,j} \right)}}} = {{\tilde \pi }_i}\left( {{\bI_{ij}} - {{\tilde \pi }_j}} \right) + \sum\nolimits_{m = 1}^\infty  {\left[{{\tilde \pi }_i}\left( {\bP_{ij}^m - {{\tilde \pi }_j}} \right) + {{\tilde \pi }_j}\left( \bP_{ji}^m - \tilde{\pi}_i \right)\right]}.  \label{8}  
\end{equation}

\subsection{Taylor's series expansion of ${D}\left( {{\boldsymbol{\Gamma} _n}\left\| {{\boldsymbol{\pi}}} \right.} \right)$}

Inspired by \cite{TIT13} (or, see \cite{unnikrishnan2011universal})
wherein the i.i.d. assumption is required, we use a Taylor's series
expansion to approximate the relative entropy ${D}\left(
  {{\boldsymbol{\Gamma} _n}\left\| {{\boldsymbol{\pi}}} \right.}
\right)$.

To this end, for $\boldsymbol{\nu} \in
  \mathcal{P}\left( \Theta \right)$ let us consider 
\begin{align}
  h\left( \boldsymbol{\nu} \right) = {D}\left( {\boldsymbol{\nu} \left\|
        \boldsymbol{\pi} \right.} \right) = \sum\limits_{i = 1}^N
  {\sum\limits_{j = 1}^N {{\nu _{ij}}\log \frac{{\frac{{{\nu
                _{ij}}}}{{\sum\nolimits_{t = 1}^N {{\nu _{it}}}
            }}}}{{\frac{{{\pi _{ij}}}}{{\sum\nolimits_{t = 1}^N {{\pi
                  _{it}}} }}}}} }.
\label{90}
\end{align}

By direct calculation, we derive
\begin{align}
\frac{{\partial h\left( {\boldsymbol{\nu }} \right)}}{{\partial {\nu _{ij}}}} =& \log {\nu _{ij}} - \log \left( {\sum\nolimits_{t = 1}^N {{\nu _{it}}} } \right) - \log {\pi _{ij}}  \nonumber \\
&+ \log \left( {\sum\nolimits_{t = 1}^N {{\pi _{it}}} } \right),  \quad i,j = 1, \ldots ,N, \label{92_updated} 
\end{align}
which leads to 
\begin{align}
\nabla h\left( \boldsymbol{\pi}  \right) = 0. 
\label{93}
\end{align}

Further, from \eqref{92_updated}, we compute the Hessian ${\nabla ^2}h\left( \boldsymbol{\nu}
\right)$ by considering three cases:

(\rmnum{1}) If $k \ne i$, then $\frac{{{\partial ^2}h\left(
      \boldsymbol{\nu} \right)}}{{\partial {\nu _{ij}}\partial {\nu
      _{kl}}}} = 0$.

(\rmnum{2}) If $k=i$ and $l=j$, then
\[\frac{{{\partial ^2}h\left( \boldsymbol{\nu} \right)}}{{\partial {\nu
      _{ij}}\partial {\nu _{kl}}}} = \frac{{{\partial ^2}h\left(
      \boldsymbol{\nu} \right)}}{{\partial \nu _{ij}^2}} =
\frac{1}{{{\nu _{ij}}}} - \frac{1}{{\sum\nolimits_{j = 1}^N {{\nu
        _{ij}}} }}.\]

(\rmnum{3}) If $k=i$ and $l \ne j$, then
\[\frac{{{\partial ^2}h\left( \boldsymbol{\nu}  \right)}}{{\partial {\nu _{ij}}\partial {\nu _{kl}}}} = \frac{{{\partial ^2}h\left( \boldsymbol{\nu}  \right)}}{{\partial {\nu _{ij}}\partial {\nu _{il}}}} =  - \frac{1}{{\sum\nolimits_{j = 1}^N {{\nu _{ij}}} }}.\]
All these terms evaluated at $\boldsymbol{\pi}$ yield ${\nabla
  ^2}h\left( \boldsymbol{\pi} \right)$, which will play a crucial role
in approximating ${D}\left( {{\boldsymbol{\Gamma} _n}\left\|
      {{\boldsymbol{\pi}}} \right.} \right)$.

From the ergodicity of the chain $\bZ$, it follows ${\boldsymbol{\Gamma}
  _n}\xrightarrow[{n \to \infty }]{{a.s.}}\boldsymbol{\pi} $. It is seen
that both $\nabla h\left( \boldsymbol{\nu} \right)$ and ${\nabla
  ^2}h\left( \boldsymbol{\nu} \right)$ are continuous in a neighborhood
of $\boldsymbol{\pi}$, and we can use the second-order Taylor's
series expansion of $h\left( \boldsymbol{\nu} \right)$ centered at
$\boldsymbol{\pi}$ to approximate ${D}\left( {{\boldsymbol{\Gamma}
      _n}\left\| {{\boldsymbol{\pi}}} \right.} \right) =
h(\boldsymbol{\Gamma} _n) - h\left( \boldsymbol{\pi}  \right)$. To be specific, we have
\begin{align}
{D}\left( {{\boldsymbol{\Gamma} _n}\left\| \boldsymbol{\pi}  \right.} \right)  \approx \frac{1}{2}{\left( {{\boldsymbol{\Gamma} _n} - \boldsymbol{\pi} } \right)^{\prime}}{\nabla ^2}h\left( \boldsymbol{\pi}  \right)\left( {{\boldsymbol{\Gamma} _n} - \boldsymbol{\pi} } \right).
\label{94}
\end{align}

\subsection{Threshold approximation}

We use an empirical CDF to approximate the actual CDF of $D\left(
  {\left. {{\boldsymbol{\Gamma} _n}} \right\|{\boldsymbol{\pi}}}
\right)$. In particular, by \eqref{94} we have
\begin{align} 
{D}\left( {{\boldsymbol{\Gamma} _n}\left\| \boldsymbol{\pi}  \right.} \right) \approx \frac{1}{{2n}}{\left( {\sqrt n \left( {{\boldsymbol{\Gamma} _n} -
          \boldsymbol{\pi} } \right)} \right)^{\prime}}{\nabla
  ^2}h\left( \boldsymbol{\pi} \right)\left( {\sqrt n \left(
      {{\boldsymbol{\Gamma} _n} - \boldsymbol{\pi} } \right)} \right),
\label{336}
\end{align}
where, by Thm. \ref{th1}, $\sqrt n \left( {{\boldsymbol{\Gamma} _n} -
    \boldsymbol{\pi} } \right)\xrightarrow[{n \to \infty
}]{{d.}}\mathcal{N}\left( {0,~\bLambda } \right)$ with $\bLambda$ given
by \eqref{8}.  Thus, to derive an empirical CDF, we can generate a set
of Gaussian sample vectors independently according to $\mathcal{N}\left(
  {0, ~\bLambda } \right)$ and then plug each such sample vector into
\eqref{336} (i.e., replace $\sqrt n \left( {{\boldsymbol{\Gamma} _n} -
    \boldsymbol{\pi} } \right)$) and compute the right hand side of
\eqref{336}, thus, obtaining a set of sample scalars, as an
approximation for samples of ${D}\left( {{\boldsymbol{\Gamma} _n}\left\|
      \boldsymbol{\pi} \right.} \right)$.

Once we obtain an empirical CDF of ${D}\left( {{\boldsymbol{\Gamma}
      _n}\left\| \boldsymbol{\pi} \right.} \right)$, say, denoted
${F_{em}}\left( \cdot ; ~n \right)$, then, by \eqref{334}, we can use it
to estimate $\eta_n$ as
\begin{align} 
\eta_n^{wc}  \approx F_{em}^{ - 1}\left( {1 - {\beta _n}}; ~n \right),
\label{337}
\end{align}
where $F_{em}^{ - 1}(\cdot; ~n)$ is the inverse of $F_{em}(\cdot; ~n)$.

Clearly, to calculate the right hand side of \eqref{336}, we need the
parameter $\boldsymbol{\pi}$. In practice, we typically do not know the
actual values of the underlying chain parameters, so we need to estimate
them accordingly. This can be done by computing $\bGamma_n$ over a long
past sample path.


\section{Experimental Results} \label{sec:num}

\subsection{Numerical results for threshold approximation} \label{sec:num.A}

The experiments in this subsection are performed using the software
package TAHTMA \cite{TAHTMA}. For convenience, we use $\Theta =
\{1,2,\ldots,N^2\}$ to indicate the states, and assume $\boldsymbol{\pi}
$ (ground truth) is the initial distribution.

In the following numerical examples, we first randomly create a valid
(i.e., such that \textbf{\text{(C)}} holds) $N \times N$ transition
matrix $\bQ$, hence an $N^2 \times N^2$ transition matrix $\bP$, and
then generate $T$ samples of the chain $\bZ$, each with length $n$,
denoted ${\bZ^{\scriptscriptstyle{\left( t \right)}}} = \big\{
{Z_1^{\scriptscriptstyle{\left( t \right)}}, \ldots
  ,Z_n^{\scriptscriptstyle{\left( t \right)}}} \big\},~t = 1, \ldots
,T$. These samples will be used to derive empirical CDF's. Also, to
estimate the parameters of interest, we generate one more training
sample ${\bZ^{\scriptscriptstyle{\left( 0 \right)}}} = \big\{
{Z_1^{\scriptscriptstyle{\left( 0 \right)}}, \ldots
  ,Z_{n_0}^{\scriptscriptstyle{\left( 0 \right)}}} \big\}$, where $n_0$
is the length. The stationary distribution $\boldsymbol{\pi} $ (estimation) of the
chain is computed by taking any row of $\bP^{m_0}$, where $m_0$ is a
sufficiently large integer.

With samples ${\bZ^{\scriptscriptstyle{\left( t \right)}}} = \big\{
{Z_1^{\scriptscriptstyle{\left( t \right)}}, \ldots
  ,Z_n^{\scriptscriptstyle{\left( t \right)}}} \big\},~t = 1, \ldots
,T$, we can compute $T$ samples of the scalar random variable ${D}\left(
  {{\boldsymbol{\Gamma} _n}\left\| {{\boldsymbol{\pi}}} \right.}
\right)$, by \eqref{2}. Thus, an empirical CDF of ${D}\left(
  {{\boldsymbol{\Gamma} _n}\left\| {{\boldsymbol{\pi}}} \right.}
\right)$, denoted ${F}\left( \cdot; ~n \right)$, can be derived. We will
treat ${F}\left( \cdot; ~n \right)$ as the actual CDF of ${D}\left(
  {{\boldsymbol{\Gamma} _n}\left\| {{\boldsymbol{\pi}}} \right.}
\right)$. The threshold given by \eqref{337} with ${F}\left( \cdot; ~n
\right)$ (i.e., $F(\cdot; ~n)$ plays the role of $F_{em}(\cdot; ~n)$) is
then taken as the theoretical (actual) value $\eta_n$.

Next we derive an estimated empirical CDF of ${D}\left(
  {{\boldsymbol{\Gamma} _n}\left\| {{\boldsymbol{\pi}}} \right.}
\right)$, using Taylor's series expansion together with Thm.~\ref{th1}.

Let us first estimate the parameters of interest.  Recall
${\boldsymbol{\pi}} = \left( {{\pi _{ij}}};~i, j = 1, \ldots, N \right)
= \left( {{{\tilde {\pi}}_k}};~k = 1, \ldots, N^2 \right)$.  We estimate
$ {{{\tilde {\pi}}_k}}$ by
\begin{align} {{\hat {\tilde {\pi}} }_k} = \max \left\{
    {\frac{{\sum\nolimits_{i = 1}^{{n_0}} {\textbf{1}_{\big\{
              {Z_i^{\scriptscriptstyle{\left( 0 \right)}} = k} \big\}}}
        }}{{{n_0}}}, ~\varepsilon } \right\}, \quad k=1,\ldots,N^2,
\label{700}
\end{align}
where $\varepsilon > 0$ is a small number. The purpose of introducing
$\varepsilon$ is to avoid division by zero.  Then, we obtain an
estimated $\boldsymbol{\pi}$ as $\hat {\boldsymbol{\pi}} = \big(
{{{{{\hat{ \tilde {\pi}} }_k}}} \mathord{\left/ {\vphantom {1 2}}
    \right. \kern-\nulldelimiterspace} {{\sum\nolimits_{j = 1}^{{N^2}}
      {{{\hat{ \tilde{ \pi}} }_j}} }}}; ~k = 1, \ldots, N^2 \big)$, or
written $\hat {\boldsymbol{\pi}} = \left( {{{\hat \pi }_{ij}}}; ~i, j =
  1, \ldots, N \right)$, where ${{\hat \pi }_{11}} = {{{\hat {\tilde{
          \pi }}}_1}} \mathord{\left/ {\vphantom {1 2}}
  \right. \kern-\nulldelimiterspace} {{\hat s}}$, $\ldots$, ${{\hat \pi
  }_{1N}} = {{{{\hat {\tilde{ \pi }}}_N}}} \mathord{\left/ {\vphantom {1
        2}} \right. \kern-\nulldelimiterspace} {{\hat s}}$, $\ldots$,
${{\hat \pi }_{N1}} = {{{{\hat {\tilde{ \pi }}}_{\left( {N - 1} \right)N
        + 1}}}} \mathord{\left/ {\vphantom {1 2}}
  \right. \kern-\nulldelimiterspace} {{\hat s}}$, $\ldots$, ${{\hat \pi
  }_{NN}} = {{{{\hat {\tilde{ \pi }}}_{{N^2}}}}} \mathord{\left/
    {\vphantom {1 2}} \right. \kern-\nulldelimiterspace} {{\hat s}}$,
with $\hat s = \sum\nolimits_{j = 1}^{{N^2}} {{{\hat {\tilde{ \pi
        }}}_j}} $ being the normalization constant.

We are now ready to calculate the estimate for ${\nabla ^2}\left( \pmb{\pi}  \right)$, by evaluating ${\nabla ^2}\left( \hat{\pmb{\pi}}  \right)$.

Further, by Lemma \ref{le1}, we estimate $\bQ$ by
\begin{align} {{\hat q}_{ij}} = \frac{{{{\hat \pi
        }_{ij}}}}{{\sum\nolimits_{t = 1}^N {{{\hat \pi }_{it}}} }},
  \quad i,j = 1, \ldots ,N.
\label{703}
\end{align}
Then an estimation of $\bP$, denoted $\hat{\bP}$, can be directly derived from $\hat{\bQ}$.

Finally, by \eqref{8}, we estimate the covariance matrix $\bLambda$ using
\begin{align}
  {{\hat \Lambda }^{\scriptscriptstyle{\left( {i,j} \right)}}} ~=~& {{\hat {\tilde \pi} }_i}\left( {{{\mathbf{I}}_{ij}} - {{\hat {\tilde \pi }}_j}} \right) + \sum\nolimits_{m = 1}^{{m_0}} {} \big[{{\hat {\tilde \pi }}_i}\left( {\hat {\mathbf{P}}_{ij}^m - {{\hat {\tilde \pi} }_j}} \right) \nonumber \\
   &+ {{\hat {\tilde \pi} }_j}\left( {\hat {\mathbf{P}}_{ji}^m - {{\hat {\tilde \pi} }_i}} \right)\big], \quad i,j = 1, \ldots ,{N^2}.  
\label{704}
\end{align}
It is worth noting that, to ensure the estimated $\bLambda$ is at least
symmetric, we update $\hat{\bLambda}$ by ${(\hat{\bLambda} +
  \hat{\bLambda}^{\prime}) \mathord{\left/ {\vphantom {1 2}} \right.
    \kern-\nulldelimiterspace} 2}$. Also, to ensure $\hat \bLambda$ to
be positive semi-definite, we use the trick of $QR$ factorization
\cite{horn2012matrix}; for details, the reader is referred to
\cite{TAHTMA}.

We now generate $T$ Gaussian sample vectors according to $\mathcal{N}(
{0,~\hat{\bLambda} } )$ and then plug each of them into \eqref{336} and
compute the right hand side of \eqref{336} with ${\nabla ^2}\left( \boldsymbol{\pi}
\right)$ replaced by ${\nabla ^2}\left( \hat{\boldsymbol{\pi}} \right)$, thus,
obtaining $T$ scalar samples as approximations for samples of ${D}\left(
  {{\boldsymbol{\Gamma} _n}\left\| \boldsymbol{\pi} \right.}
\right)$. Therefore, an estimated empirical CDF of ${D}\left(
  {{\boldsymbol{\Gamma} _n}\left\| {{\boldsymbol{\pi}}} \right.}
\right)$, denoted ${F_{em}}\left( \cdot; ~n \right)$, can be derived
accordingly. $\eta_n^{wc}$ given by \eqref{337} is then the WC
approximation for the threshold.

Let the false alarm probability be $\beta_n = \beta = 0.001$. We set the
aforementioned parameters as $N = 12$, $\varepsilon = 10^{-8}$, $T =
1000$, $m_0 = 1000$, and let $n_0 = 1000N^2$. In
Fig. \ref{eta_comp_N_12}, the red line indicates the theoretical value
of the threshold $\eta_n$, the blue line shows the WC approximation
$\eta_n^{wc}$, and the green line demonstrates the estimate
$\eta_n^{sv}$ obtained by Sanov's theorem, all as a function of the
number of samples $n$. It is seen that $\eta_n^{wc}$ is much more
accurate than $\eta_n^{sv}$.

\begin{figure}[thpb]  
   \centering
   \includegraphics[scale=0.13]{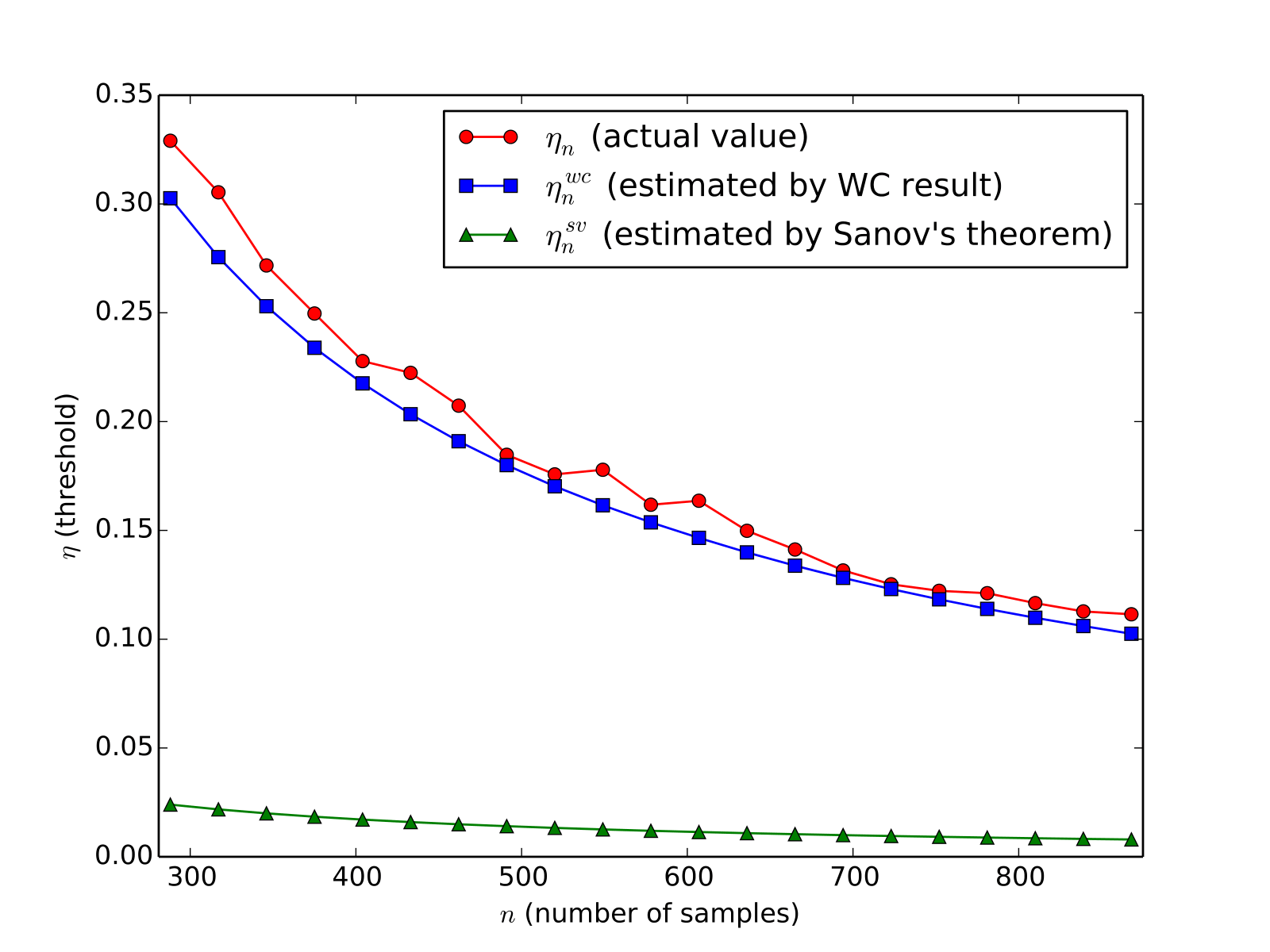}
   \caption{Threshold ($\eta$) versus number of samples ($n$); the number of states in the original Markov chain is $N = 12$.}
\label{eta_comp_N_12}
\end{figure}

\textit{Remark 3:} In Fig. \ref{eta_comp_N_12}, the red line
corresponding to the actual value $\eta_n$ is not as smooth; the reason
is that each time when varying the number of samples $n$, we regenerate
all the samples $\bZ^{\scriptscriptstyle{\left( t \right)}} = \big\{
{Z_1^{\scriptscriptstyle{\left( t \right)}}, \ldots
  ,Z_n^{\scriptscriptstyle{\left( t \right)}}} \big\}, ~t = 1, \ldots
,T$, from scratch. On the other hand, the blue line corresponding to
$\eta_n^{wc}$ looks smooth; this is because we only need to generate the
$T$ Gaussian sample vectors according to $\mathcal{N}( {0,
  ~\hat{\bLambda} } )$ once. Note that, in our experiments, most of the
running time is spent generating the samples
${\bZ^{\scriptscriptstyle{\left( t \right)}}} = \big\{
{Z_1^{\scriptscriptstyle{\left( t \right)}}, \ldots
  ,Z_n^{\scriptscriptstyle{\left( t \right)}}} \big\},~t = 1, \ldots
,T$, and then calculating $\eta_n$. In practice, when implementing the
approximation procedure proposed in this subsection, we will only need
to focus on obtaining $\eta_n^{wc}$, which is computationally
inexpensive.

\subsection{Simulation results for network anomaly detection}

We will use the term \textit{traffic} and \textit{flow} interchangeably in this subsection; they mean the same thing. The simulations are done using the software package SADIT \cite{SADIT}, which, based on the \textit{fs}-simulator \cite{fs}, can efficiently generate flow-level network traffic datasets with annotated anomalies. 

The network (see Fig. \ref{simulation_setting}) consists of an internal network involving 8 normal users (\textit{CT1-CT8}) and 1 server (\textit{SRV}) that stores some sensitive information, and 3 Internet nodes (\textit{INT1-INT3}) that visit the internal network via a gateway (\textit{GATEWAY}). 

\begin{figure}[thpb]  
   \centering
   \includegraphics[scale=0.4]{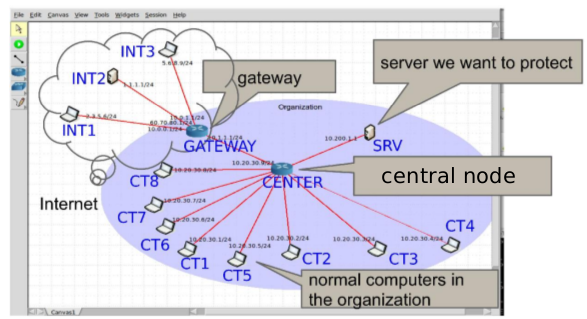}
   \caption{Simulation setting (this figure is from \cite{robust-anomaly-tcns} (or, see \cite{CDC13})).}
\label{simulation_setting}
\end{figure}

When dealing with the data, we use the features ``flow duration,'' ``flow
size,'' and ``distance to cluster center'' (this is related to a user IP
address). The data preprocessing involves three steps: \rmnum{1})
\textit{network traffic representation and flow aggregation}, \rmnum{2})
\textit{quantization}, and \rmnum{3}) \textit{window aggregation};
cf. \cite[Sec. \Rmnum{3}.A]{robust-anomaly-tcns}.

To implement Hoeffding's test, we first estimate a PL (resp., a set of PLs) for the stationary (resp., time-varying) normal traffic. For the former case, we use as the reference traffic the whole flow sequence with anomalies at some time interval that we seek to detect. For the latter case, we generate the reference traffic and the traffic with anomalies separately, sharing all of the parameters except the ones regarding the anomalies. Using the windowing technique \cite{CDC13}, \cite{robust-anomaly-tcns}, we then persistently monitor the traffic and report anomalies instantly whenever the relative entropy is higher than the threshold corresponding to the detection window.  

The key difference between the whole detection procedures for the two types of traffic is that, estimating a PL for the stationary traffic is straightforward, while, for the time-varying traffic, we need to make an effort to identify several different PLs corresponding to certain time intervals. We apply the two-steps procedure proposed in \cite{robust-anomaly-tcns}; that is, with some rough estimation of the shifting and periodic parameters of the traffic, we first generate a large set of PLs, and then refine the large family of PLs by solving  a weighted set cover problem. For further theoretical details and the implementation, the reader is referred to \cite{robust-anomaly-tcns}, \cite{SADIT}.

Note that, to deal with the time-varying traffic, \cite{robust-anomaly-tcns} proposes a generalized Hoeffding's test, which we will use for Scenario 2 in the following. However, there is no essential difference compared to the typical Hoeffding's test. Note also that, in the following experiments, we only use the \textit{model-based} method \cite{CDC13}, \cite{robust-anomaly-tcns}.

\subsubsection{Scenario 1 (Stationary Network Traffic)}

We mimic the case for anomaly caused by ``large access rate;''
cf. \cite[Sec. \Rmnum{4}.A.3]{CDC13}. The simulation time is $7000$
s. A user suspiciously increases its access rate to 10 times of its
normal value between $4000$ s and $4500$ s.  The interval between the
starting point of two consecutive time windows is taken as $50$ s, the
window-size is chosen to be $w_s = 400$ s, and the false alarm
probability is set as $\beta = 0.001$.

Set the quantization level for ``flow duration,'' ``flow size,'' and ``distance to cluster center'' to be $n_1 = 1$, $n_2 = 2$, and $n_3 = 2$, respectively. Set the number of user clusters as $k = 3$. 

\begin{figure}[thpb]  
   \centering
   \includegraphics[scale=0.13]{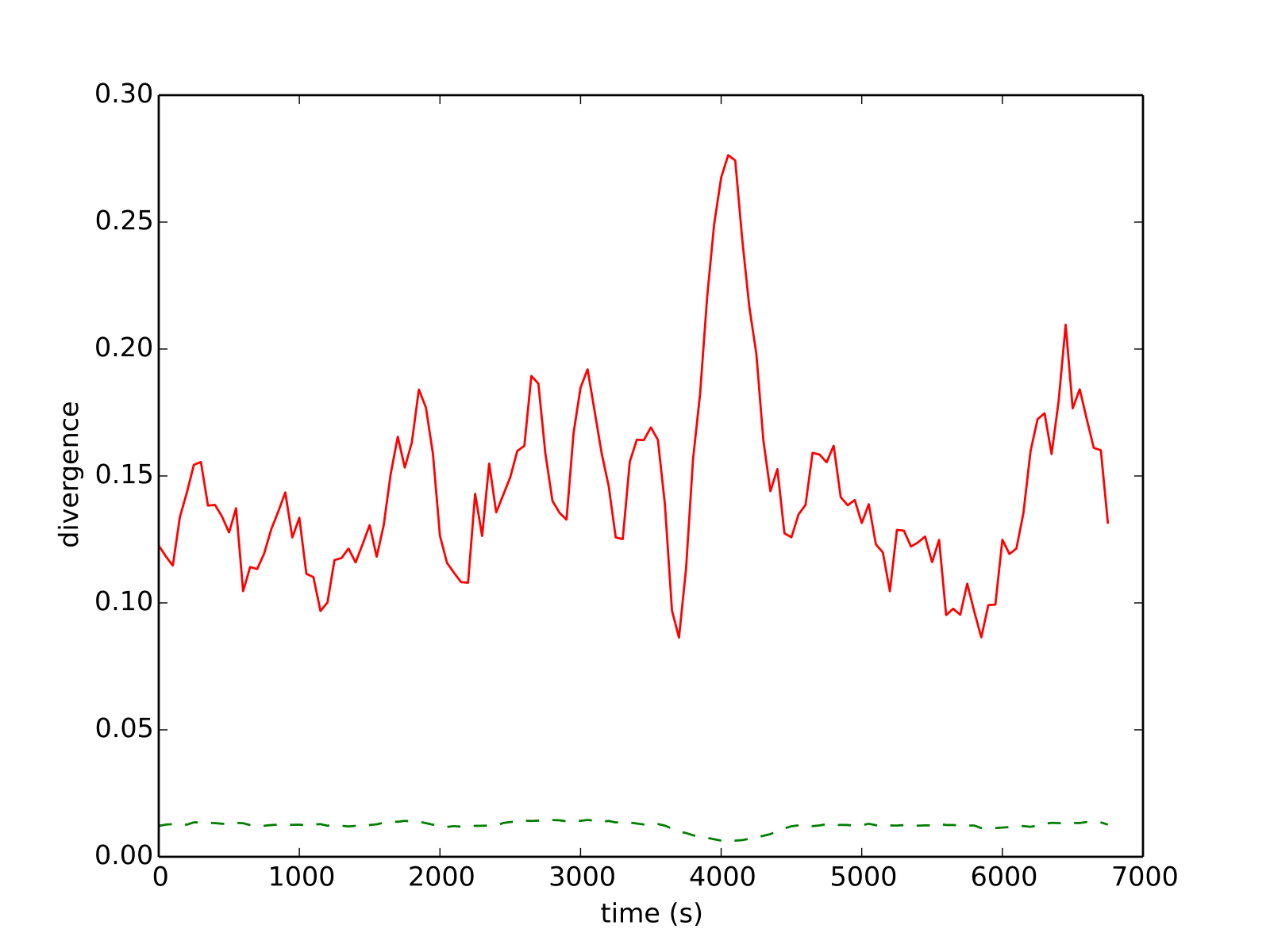}
   \caption{Detection result for Scenario 1 with $w_s = 400$ s, $n_1 = 1$, $n_2 = 2$, $n_3 = 2$, $k =3$; threshold is estimated by use of Sanov's theorem.}
\label{LargeRateSanov}
\end{figure}

\begin{figure}[thpb]  
   \centering
   \includegraphics[scale=0.13]{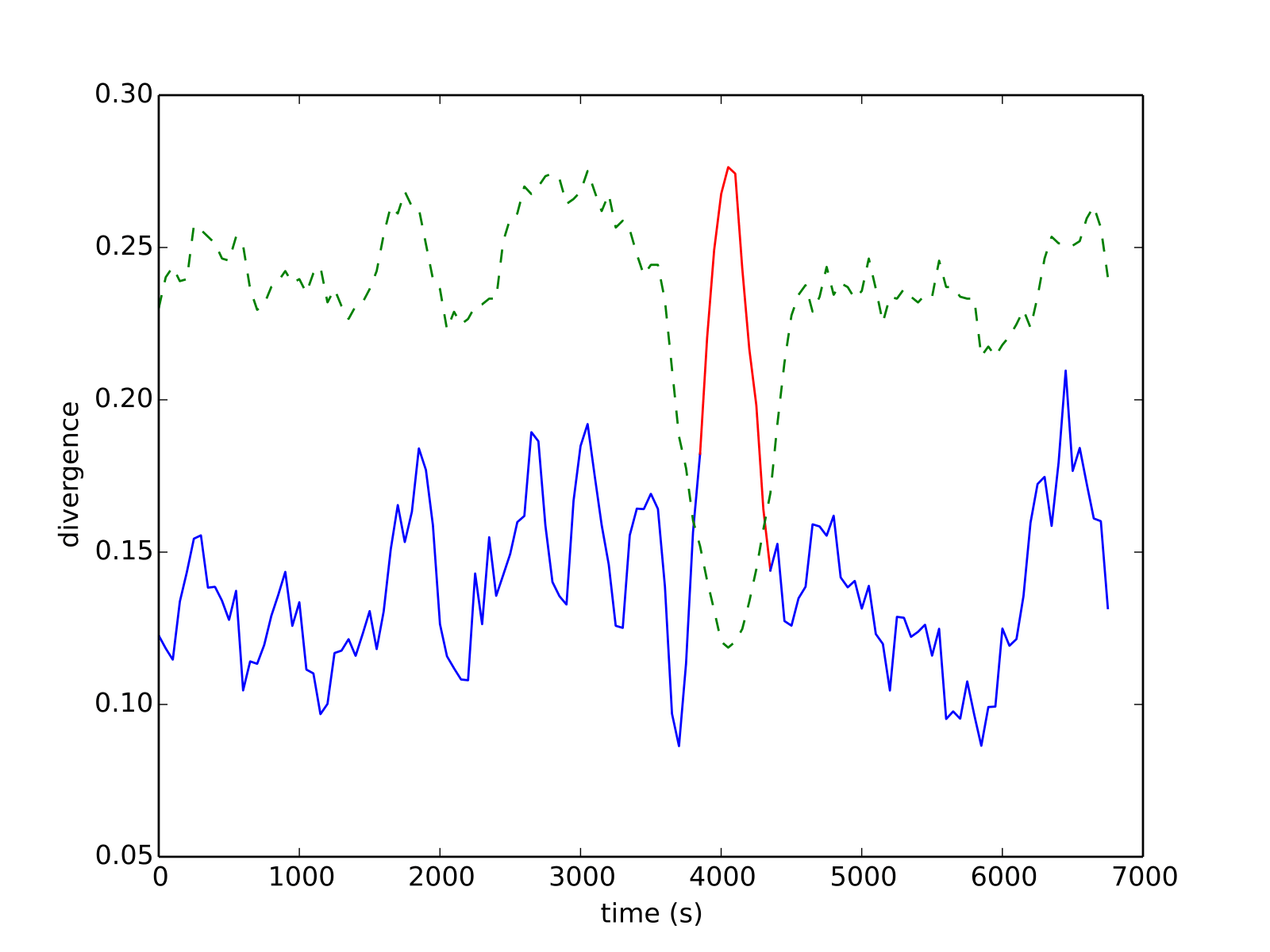}
   \caption{Detection result for Scenario 1 with $w_s = 400$ s, $n_1 = 1$, $n_2 = 2$, $n_3 = 2$, $k =3$; threshold is estimated by use of weak convergence result.}
\label{LargeRateTaylor}
\end{figure}

See Fig. \ref{LargeRateSanov} and Fig. \ref{LargeRateTaylor} for the
detection results. Both figures depict the relative entropy (divergence)
metric in \eqref{2}. The green dashed line in Fig.~\ref{LargeRateSanov}
is the threshold calculated using Sanov's theorem (i.e., $\eta_n^{sv}$,
where $n$ is the number of flows in a specific detection window). The
green dashed line in Fig. \ref{LargeRateTaylor} is the threshold
computed using the WC result derived in Sec. \Rmnum{3} (i.e.,
$\eta_n^{wc}$).  The interval during which the entropy curve is above
the threshold line (the red part) is the interval reported as abnormal.

Fig. \ref{LargeRateSanov} shows that, if using $\eta_n^{sv}$ as the
threshold, then the detection method reports an anomaly for all windows
including the ones wherein the behavior of the network traffic is
actually normal; in other words, there are too many false
alarms. Clearly, this is undesirable. Fig. \ref{LargeRateTaylor} shows
that, if, instead, we use $\eta_n^{wc}$ as the threshold, then the
detection method will not report any false alarm while successfully
identifying the true anomalies between $4000$ s and $4500$ s. The reason
for such detection results is that $\eta_n^{wc}$ is more accurate than
$\eta_n^{sv}$.

\subsubsection{Scenario 2 (Time-Varying Network Traffic)}

Consider the network with day-night pattern where the flow size follows
a log-normal distribution. We use exactly the same scenario as that in
\cite[Sec. \Rmnum{4}.B.2]{robust-anomaly-tcns}.

Applying the same procedure as in \cite[Sec. \Rmnum{3}.C]{robust-anomaly-tcns},
we first obtain $32$ rough PLs. Then, using the heuristic PL refinement
algorithm given in \cite[Sec. \Rmnum{3}.D]{robust-anomaly-tcns} equipped with
the up-bound nominal cross-entropy parameter $\lambda = 0.027565$, which
is determined by the WC approximation, we finally obtain $6$ PLs (see
Fig. \ref{time-varying-pl}). These PLs are active during morning,
afternoon, evening, mid-night, dawn, and the transition time around
sunrise, respectively.  In the following detection procedure, the key
difference between our method and the one used in \cite{robust-anomaly-tcns} is that
we no longer manually set the threshold universally as a constant;
instead, the threshold $\eta_n^{wc}$ for each detection window is
automatically calculated by use of the WC approximation. We set the
quantization level for ``flow duration,'' ``flow size,'' and ``distance
to cluster center'' to be $n_1 = 1$, $n_2 = 4$, and $n_3 = 1$,
respectively, and set the number of user clusters to $k = 1$. The
interval between the starting point of two consecutive time windows is
chosen as $1000$ s, the window-size is taken as $w_s = 1000$ s, and the
false alarm probability is set as $\beta = 0.001$.

Noting that the anomaly is simulated beginning at 59 h and lasting for
80 minutes \cite{robust-anomaly-tcns}, we see from Fig. \ref{time-varying-detect}
that the anomaly is successfully detected, without any false alarms.

\begin{figure}[thpb]  
   \centering
   \includegraphics[scale=0.13]{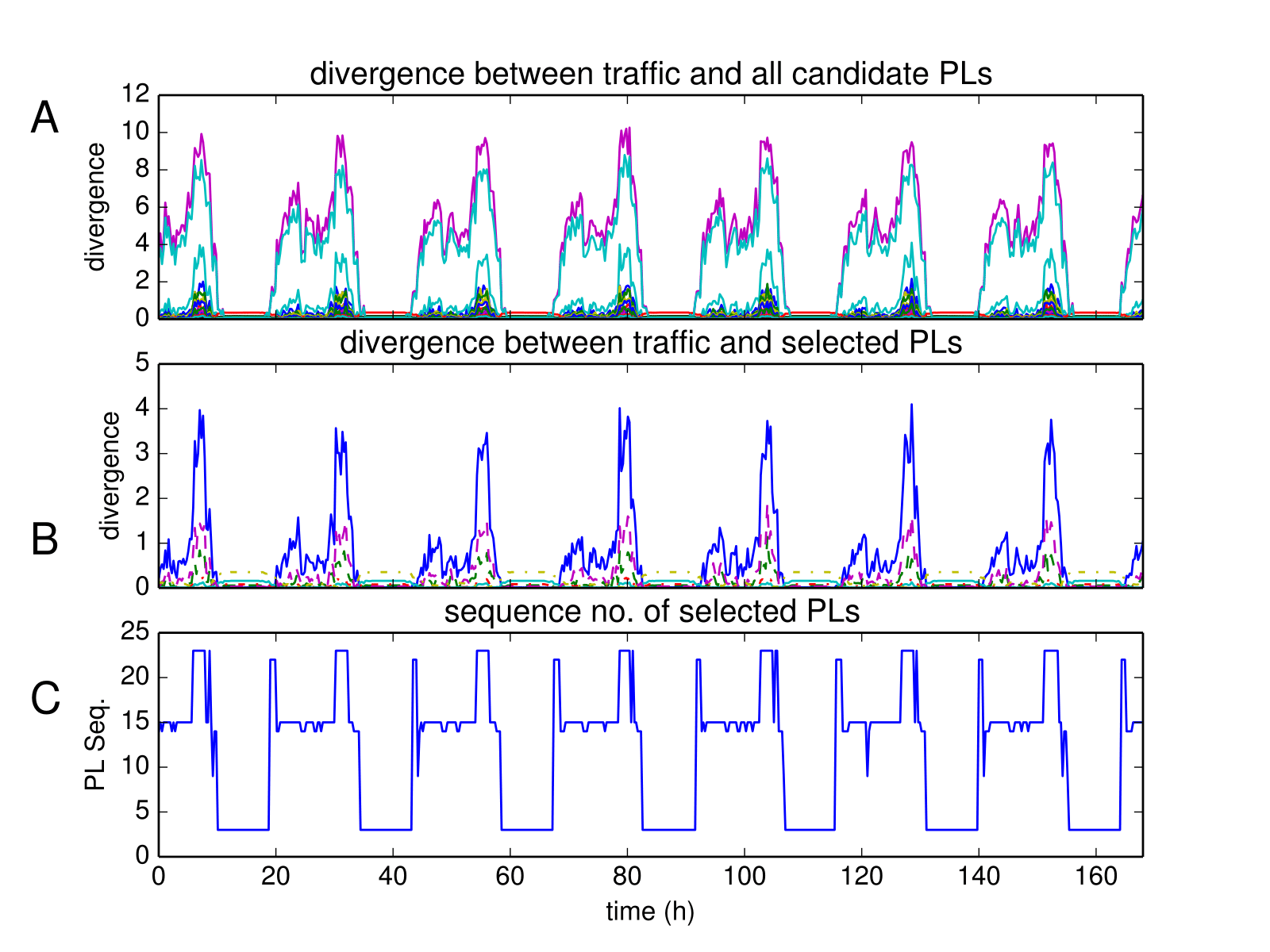}
   \caption{PLs identification for Scenario 2 with $w_s = 1000$ s, $n_1 = 1$, $n_2 = 4$, $n_3 = 1$, $k =1$.}
\label{time-varying-pl}
\end{figure}

\begin{figure}[thpb]  
   \centering
   \includegraphics[scale=0.13]{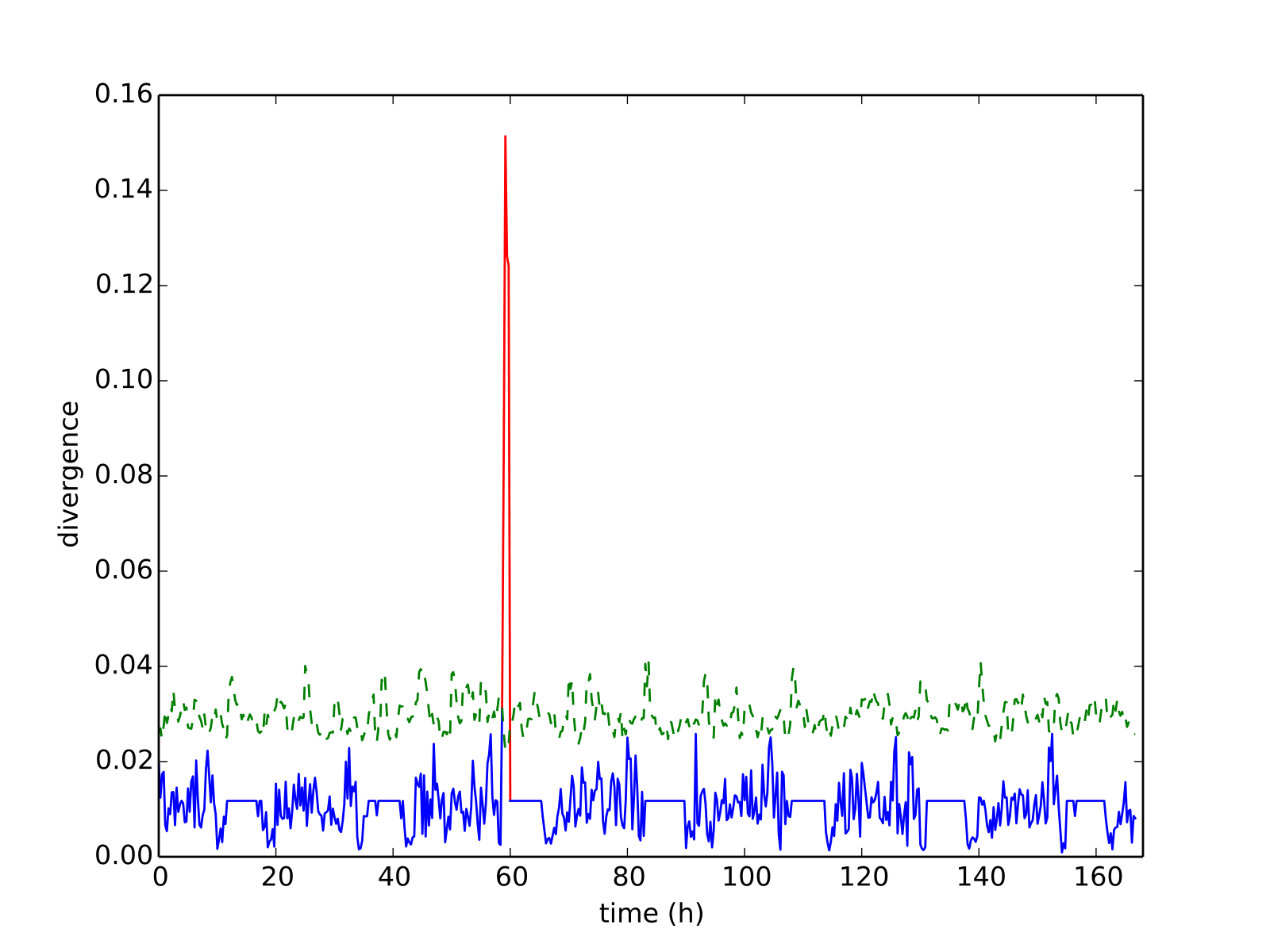}
   \caption{Detection result for Scenario 2 with $w_s = 1000$ s, $n_1 = 1$, $n_2 = 4$, $n_3 = 1$, $k =1$.}
\label{time-varying-detect}
\end{figure}

\section{Conclusions} \label{sec:con}

We establish a weak convergence result for the relative entropy in
Hoeffding's test under a Markovian assumption, which enables us to obtain
a more accurate approximation (compared to the existing approximation
given by Sanov's theorem) for the threshold needed by the test. We
validate the accuracy of such approximation through numerical
experiments and simulations for network anomaly detection.


\section*{Acknowledgements}

The authors would like to thank Dr. Jing Wang for his contributions and
help in developing the software package SADIT \cite{SADIT}.

\bibliographystyle{IEEEtran}

\bibliography{bib1}

\addtolength{\textheight}{-3cm}   

\end{document}